\NewDocumentCommand{\ipic}{mO{scale=0.4}}{
	\includegraphics[#2]{#1.pdf}
}
\NewDocumentCommand{\pic}{O{}mO{scale=0.4}}{
	\begin{figure}[ht]
		\center
		\ipic{#2}[#3]
		\caption{\label{#2}	#1}
	\end{figure}
}
\NewDocumentCommand{\pics}{omo}{
	\begin{figure}[t]
		\center
		#2
		\IfValueTF{#1}{\caption{#1 
			\IfValueTF{#3}{\label{#3}}{}
		}}{}		
	\end{figure}
}
\NewDocumentCommand{\subpic}{O{scale=0.4}mm}{
	\subcaptionbox{#2 
		\label{#3}		
	}{\ipic{#3}[#1]}
}
\NewDocumentCommand{\sym}{smO{}moo}{
	\expandafter\NewDocumentCommand\csname #2\endcsname{#3}{\ensuremath{{#4}}\xspace}
}
\NewDocumentCommand{\name}{mmm}{
	\expandafter\NewDocumentCommand\csname #1n\endcsname{}{#2\xspace}
	\expandafter\NewDocumentCommand\csname #1a\endcsname{}{#3\xspace}	
	\expandafter\NewDocumentCommand\csname #1ns\endcsname{}{#2s\xspace}
	\expandafter\NewDocumentCommand\csname #1as\endcsname{}{#3s\xspace}	
}
\name{pa}{phase angle}{PA}
\name{mpf}{maximum potential flow}{MPF}
\name{mff}{maximum FACTS flow}{MFF}
\name{msf}{maximum switching flow}{MSF}
\name{dc}{Linear DC network}{LDC network}
\name{fdc}{FACTS Linear DC network}{FLDC network}
\name{gfch}{generation-FACTS-choice network}{GFCN}
\name{gsch}{generation-Switching-choice network}{GSCN}
\name{lfch}{load-FACTS-choice network}{LFCN}
\name{lsch}{load-Switching-choice network}{LSCN}
\name{sol}{feasible solution}{FS}
\name{optsol}{optimal solution}{OS}
\begin{document}
\title{The Complexity of Switching and FACTS Maximum-Potential-Flow Problems}
\author{Karsten Lehmann\inst{2,1} \and Alban Grastien\inst{2,1} \and Pascal Van Hentenryck\inst{1,2}}
\institute{Artificial Intelligence Group, Australian National University\\
    \and
    Optimisation Research Group, National ICT Australia, Canberra Laboratory
    \email{\{first.last\}@nicta.com.au}
} 
\maketitle              

\begin{abstract}
This papers considers the problem of maximizing 
the load that can be served by a power network.  
We use the commonly accepted Linear DC power network model 
and consider two configuration options: 
switching lines and using FACTS devices.  
We present the first comprehensive complexity study of this optimization problem.  
Our results show that the problem is NP-complete and that there is no fully polynomial-time approximation scheme. 
For switching, these results extend to planar networks with a maximum-node degree of $3$.
Additionally, we demonstrate that the optimization problems are still NP-hard if we restrict the network structure to cacti with a maximum degree of $3$.
\keywords{computational complexity, linear DC power model, max flow, cacti}

\end{abstract}
\section{Introduction}

By design, power networks are capable of satisfying a larger demand
than normal operations require.  In critical situations however, e.g.,
when heat waves cause the demand to peak or when a natural disaster
causes significant damage to the network, the full demand becomes
impossible to meet and power utilities face the problem of
maximizing the power they can deliver to customers.

For instance, \emph{power system restoration} following a disaster is
the problem of repairing and reconfiguring the network to resupply as
much of the demand as possible as fast as possible.  This problem has
been studied by power engineers for over 30 years (see
\cite{adibi2000power} for a comprehensive collection of works).
Finding the network configuration that maximizes the load served is a
subproblem of power system restoration: the latter requires the
configuration to be optimized after each repair action and additionally
needs to integrate issues such as the routing of repair crews \cite{PSCC11}.
Because it is only a part of this more complex optimization problem,
finding the optimal configuration -- or at
least providing guarantees on the quality of the configuration obtained -- must be achieved quickly.

This paper presents the first comprehensive study of the computational
complexity of maximizing the flow in a power network.  We use the
commonly accepted Linear DC power network model and consider two
reconfiguration options: reconfiguration via {\em line switching} and 
reconfiguration via the use of \emph{Flexible
  Alternating Current Transmission System} (FACTS) devices.
Line switching means physically disconnecting 
two nodes that were previously connected 
(or vice versa, connecting two
previously disconnected nodes).
Power networks exhibit a phenomenon 
akin to Braess Paradox in transport networks 
\cite{Braess_1968_UbereinParadoxon} 
whereby switching lines off can increase the maximum power flow.  
This is because the flow in a power line depends 
on the difference of potentials of its end nodes:
this creates additional (cyclic) constraints on the flow 
that can be eliminated by switching.
FACTS devices can modify the physical characteristics of the lines -- 
in particular their susceptance --
and partially relax the cyclic constraints.  
The switching reconfiguration problem consists in choosing the set of
lines that should be switched off, whereas the FACTS reconfiguration
problem consists in choosing the parameters of the lines (given
appropriate upper and lower bounds for these parameters).
In both cases the objective is to maximize the power flow through the
network.

In this paper, we present complexity results answering the question of
whether there exist efficient algorithms to solve the switching and
FACTS max-flow problems. The detailed contributions of the paper and
its organization are as follows.  We first describe the problems in their
full mathematical details (Section~\ref{model}).  Then in
Section~\ref{sec:general_complexity}, we present general complexity results
showing that the problems do not admit a 
fully polynomial time approximation scheme.  Real power
networks usually have a simple structure: they are almost planar and
have a fixed maximum degree.  To obtain complexity results that are
relevant to the real-world case, we search for the most ``complex'' network structure 
for which the reconfiguration problems are easy.  In Section~\ref{sec:beyond_trees} we
show that \emph{cacti} with a fixed maximum degree of $3$ already make
the problems NP-hard.  We prove a similar result for another simple
relaxation of trees, called \emph{n-level tree networks}.  This leads
us to conclude that trees are the most ``complex'' network structure
that can be easily configured optimally, although whether there exists a
fully polynomial time approximation scheme for cacti and/or n-level
tree networks is still an open question.
It should be noted that the complexity results 
we present use unrealistic network parameters for simplicity.  
The values can all be scaled to be realistic without influencing the results.
The body of the paper presents the reductions needed to establish our complexity results; the proofs are given in appendix.  


\section{Problem Definition}
\label{model}

This section presents the network model and power flow equations we use, and defines the reconfiguration problems we study.

\subsection{Network Model}

In this paper we use the \emph{Linear DC (LDC) model} of electrical power networks \cite{Schweppe_1970_Powersystemstatic}.
The LDC model is a linearization of the nonlinear steady-state electrical power flow equations (Alternating Current Model) and is widely used in practice.  
It assumes that all voltage magnitudes are one in the per-unit system and ignores reactive power and resistance which are small relative to real power and reactance during normal operations. 
What is left is the susceptance%
\footnote{Susceptance is a negative value but when used to calculate the flow it is multiplied with $-1$. For readability, we omit the $-1$ and make the susceptance a positive value.}
(the inverse of the reactance), the capacity and the phase angles of the voltages.
The flow that is transmitted by an edge is the product of the phase angle difference between its two ends and the susceptance.  
The network may be equipped with FACTS devices, which are
physical devices allowing the (otherwise constant) susceptance parameter to vary.

In the following,
\powerset{X}[2] denotes the set of all subsets of $X$ with 2 elements. We write \intervals (resp. \intervalsp) for the set of all intervals 
over the set $\reals$ (resp. $\preals$). 


\begin{definition}
	A \emph{\fdcn (\fdca)} is a tuple $\net = (\buses, \generators, \loads, \lines)$ where \buses is the set of \emph{nodes}; $\generators \subseteq \buses$ is the set of \emph{generators}; $\loads \subseteq \buses \setminus \generators$ is the set of \emph{loads}; and $\lines \subseteq \powerset{\buses}[2] \times \intervalsp \times \preals$ 
is the set of \emph{edges} with their \emph{susceptance limits} and \emph{capacity} 
such that no two edges connect the same pair of nodes, i.e., $(\{a,b\}, sl_1, c_1), (\{a,b\}, sl_2, c_2) \in \lines \implies sl_1 = sl_2 \text{ and } c_1 = c_2$.  
\end{definition}

We define functions \func{\suscaps}{\lines}{\intervalsp} and \func{\capas}{\lines}{\preals} such that $\suscaps(e)$ is the susceptance limit of an edge and $\capas(e)$ is its capacity.
Moreover, we write $\edge[a][b][[s,t]][p]$ for an edge from $a$ to $b$ with susceptance limit $[s,t]$ and capacity $p$.
If the susceptance is fixed, i.e., $t = s$, we shall write \edge[a][b][s][p]. We may also ignore these values and simply refer to the edge by \edge[a][b].
The above model does not explicitly give upper bounds on the generation or load of a node.  
Such constraints can be modeled by connecting this node to the network 
through a single edge whose capacity is the maximum output/intake of the node.  

An \fdca where all susceptances are fixed, i.e., without
FACTS devices, is called a \dcn.  The only reconfiguration option for
such networks is switching.

\begin{definition}
    An \fdca \net is called a \emph{\dcn (\dca)} if all susceptances are fixed, i.e., $\forall \edge[a][b][[s,t]][p] \in \lines: s = t$.
\end{definition}
To represent switched network configurations, we need the notion of a subnetwork.
The \emph{sub-network} $\net[\linesw]$ of an \fdca $\net = (\buses, \generators, \loads, \lines)$ in which the edges in \linesw have been removed is $\net[\linesw] := (\buses, \generators, \loads,  \lines \setminus \linesw).$
The \emph{sum} $\net + \net'$ of two \fdcas $\net = (\buses, \generators, \loads, \lines)$ and $\net' = (\buses', \generators', \loads', \lines')$ with $\lines \cap \lines' = \emptyset$ is the \fdca $\net + \net' := (\buses \cup \buses', \generators \cup \generators', \loads \cup \loads', \lines \cup \lines')$.

\subsection{Power Flow Equations}  

We now introduce the notations and equations pertaining to \fdca power flows.
The following definitions assume a fixed \fdca $\net = (\buses, \generators, \loads, \lines)$.
We define functions \func{\suses}{\lines}{\preals} and \func{\pas}{\buses}{\reals} such that $\suses(e)$ is the \emph{susceptance} of edge $e$ and $\pas(a)$ is the 
\emph{\pan} at node $a$. Moreover the \emph{generation} and \emph{load} at a node are given by functions \func{\gens}{\buses}{\preals} and \func{\loadsy}{\buses}{\preals} such that $\forall a \in \buses \setminus \generators: \gen[a] := 0$ and $\forall a \in \buses \setminus \loads: \load[a] := 0$. 

The \emph{flow} on an edge is given by function \func{\flows}{\lines}{\reals}. 
The edges of \fdcas are undirected, however to be able to describe flows, we need a notion of orientation of an edge but the concrete orientation we choose does not influence the theory.
To that end, whenever we define an edge, we abuse the notation \edge[a][b] to also represent that $\flow[a][b] \geq 0$ whenever the flow goes from $a$ to $b$ and $\flow[a][b] \leq 0$ otherwise.

The \dca model imposes two sets of constraints: Kirchhoff's conservation law and the \dca power law.
Kirchhoff's conservation law states that the power that enters a node equals the power that leaves this node.  
The \dca power law binds together the power flow, the phase angle and the susceptance of an edge.  

\begin{definition}
  A triple $(\flows, \gens, \loadsy)$ satisfies \emph{Kirchhoff's conservation law} if $\forall a \in \buses$:
  $\sum_{\edge[a][b] \in \lines} \flow[a][b] - \sum_{\edge[b][a] \in \lines} \flow[b][a] = \gen[a] - \load[a].$
\end{definition}

\begin{definition}
    A triple $(\suses, \pas, \flows)$ satisfies the \emph{\dca power law} if $\forall \edge[a][b] \in \lines: \flow[a][b] = \sus[a][b] (\pa[b] - \pa[a]).$
\end{definition}
\begin{definition}
We call a tuple \sol a \emph{\soln} if: $(\flows, \gens, \loadsy)$ satisfies Kirchhoff's conservation law; $(\suses, \pas, \flows)$ satisfies the \dca power law; $\forall \arc \in \lines: \suses(\arc) \in \suscaps(\arc)$ and $\forall \arc \in \lines: |\flows(\arc)| \leq \capas(\arc)$.\\
We write \sols for the set of all \solns of \net.
\end{definition}
For \solns of an \dca, we omit the susceptance and write \dsol.

\subsection{Reconfiguration Problems}
As mentioned in the introduction of this paper, several significant real-world applications such as load shedding and power supply restoration motivate the problem
of maximising the flow through the network.
We now define several useful variants of this problem: the \emph{\mffn (\mffa)} has to satisfy both power laws; the \emph{\mpfn (\mpfa)} is a variant of \mffa where susceptance is fixed (i.e., it applies to an \dca); finally the \emph{\msfn (\msfa)} also applies to \dcas but allows switching, i.e. the removal of edges.  


\begin{definition}
The \emph{\mpfn (\mpfa)} of an \dca \net is defined as
$\mpf := \max_{\dsol \in \sols} \sum_{g \in \buses} \gen[g].$
    A \soln \dsol with $\mpf = \sum_{g \in \buses} \gen[g]$ is called \emph{\optsoln}.

The \emph{\mffn (\mffa)} of an \fdca \net is defined as
$\mff := \max_{\sol \in \sols} \sum_{g \in \buses} \gen[g].$
    A \soln \sol with $\mff = \sum_{g \in \buses} \gen[g]$ is called \emph{\optsoln}.

The \emph{\msfn (\msfa)} is the \mpfa of a sub-network of \net that maximizes the flow, i.e.,
$\msf := \max_{\linesw \subseteq \lines} \mpf[\net[\linesw]].$
    We call \swsol an \optsoln if $\msf = \mpf[\net[\linesw]]$ and \dsol is an \optsoln for \mpf[\net[\linesw]].
\end{definition}
To establish our complexity results, we define a decision version of our optimization problems.
\begin{definition}
Given an $x \in \preals$, 
the \emph{\mpfa/\msfa/\mffa problem} is the problem of deciding if $\mpf/\msf/\mff \geq x$.
\end{definition}

\paragraph{Examples and Graphical Representation:}
\pics[Examples for \mpfa, \msfa and \mffa.]{
	\subpic{An \dca.}{triangle}
    \subpic{The max flow.}{triangle_mf}
	\subpic{The \mpfa.}{triangle_mpf}
	\subpic{The \msfa.}{triangle_msf}	
	\subpic{An \fdca.}{triangle_facts}
	\subpic{The \mffa.}{triangle_mff}		
}[pic:examples]
Figure~\ref{pic:examples} introduces our graphical representations for LDC and \fdcas along with examples for \mpfa, \msfa and \mffa. 
We omit the susceptance and/or capacity of an edge when its value is $1$.
Figure~\ref{triangle} shows an \dca where $g$ is a generator (box), $l$ is a load (house) and $b$ is a normal node (sphere). 
Its easy to see that the traditional max flow for this network is $34$ whereas in the LDC model, we only can supply $16$ as shown in Figure~\ref{triangle_mpf} because the congestion of the edge \edge[b][l][1][4] constrains the phase angle (written as $A=$ in the nodes) between $g$ and $l$. 
However, by switching the edge \edge[g][b][1][5], we can improve the maximum generation to $30$ as shown in Figure~\ref{triangle_msf}.
Figure~\ref{triangle_facts} shows a variant of the network with two FACTS devices. These allow the maximum generation to reach $28$ as shown in Figure~\ref{triangle_mff}.

\section{General Complexity}
\label{sec:general_complexity}
Solving the \mpfa problem is known to be polynomial as all constraints are linear and the problem can therefore be formulated as a linear program (LP).
In this section we demonstrate that the problem becomes harder when reconfiguration is allowed.  
To this end we build generator choice networks, which are building blocks for our proofs.  
We then show NP-Completeness in general networks and prove strong NP-hardness for switching in planar networks.  

It is easy to see that both problems are not worse than NP, since checking a given solution is as simple as summing up the generation at all buses and doing one comparison.
The \msfa optimization problem can also be formulated as a mixed-integer-linear program (MILP)\cite{Fisher_2008_OptimalTransmissionSwitching}.
\begin{lemma}
    \label{lem:innp}
The \msfa and the \mffa problem are in NP.
\end{lemma}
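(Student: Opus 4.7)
The plan is to exhibit, for each problem, a polynomial-size certificate whose correctness can be checked in polynomial time. The common underlying idea is that once certain combinatorial choices are fixed, the remaining feasibility question becomes a linear program; so if some solution meets the threshold, some vertex of the LP feasibility polytope does too, and such a vertex has bit-complexity polynomial in the input (Khachiyan).

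For \msfa, the natural certificate is a pair $(\linesw, \dsol)$: the subset $\linesw \subseteq \lines$ of switched-off edges together with a feasible solution of the sub-network $\net[\linesw]$ whose total generation is at least $x$. Once $\linesw$ is fixed, deciding whether $\mpf[\net[\linesw]] \geq x$ is an LP, because Kirchhoff's conservation law, the \dca power law (with fixed susceptances), the capacity bounds, and the inequality $\sum_{g \in \buses} \gen[g] \geq x$ are all linear in $(\pas, \flows, \gens, \loadsy)$. Hence whenever $\msf \geq x$, a witnessing \dsol can be chosen at a vertex and admits a polynomial-size encoding. Verification rechecks the two power-flow laws, the capacity and generation bounds, and the threshold inequality, all in polynomial time.

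For \mffa, the only non-linear constraint is the \dca power law $\flow[a][b] = \sus[a][b] (\pa[b] - \pa[a])$, which couples the susceptance and phase-angle variables. The key observation is that once, for each edge with susceptance limit $[s,t]$, the sign of $\pa[b] - \pa[a]$ is committed to, the conjunction of $\sus[a][b] \in [s,t]$ and $\flow[a][b] = \sus[a][b] (\pa[b]-\pa[a])$ is equivalent to the linear constraint $s(\pa[b]-\pa[a]) \leq \flow[a][b] \leq t(\pa[b]-\pa[a])$ when $\pa[b] \geq \pa[a]$, and to the reversed interval otherwise. Guessing an orientation per edge therefore reduces \mffa to an LP of exactly the type used above. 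The certificate consists of the per-edge sign vector together with the corresponding LP-vertex solution; the susceptance on each edge with $\pa[b] \neq \pa[a]$ is then recovered as $\sus[a][b] = \flow[a][b] / (\pa[b] - \pa[a])$, and chosen arbitrarily in $[s,t]$ when $\pa[b] = \pa[a]$. Verification proceeds as before.

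The only real obstacle is the bilinear \dca power law in the \mffa case; the guess-and-verify structure of NP together with the orientation-based linearization above addresses it cleanly, and standard LP bit-complexity bookkeeping keeps the certificates polynomial in size.
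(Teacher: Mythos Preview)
Your argument is correct and, in fact, considerably more careful than the paper's own. The paper disposes of Lemma~\ref{lem:innp} in two sentences preceding its statement: it simply says that a given solution can be checked by summing the generations and comparing to the threshold, and notes that \msfa admits a MILP formulation. No explicit certificate-size analysis is given, and the bilinear coupling $\flows = \suses \cdot \Delta\pas$ in the \mffa case is not discussed at all.

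Your proposal goes further in two respects. First, you address the bit-complexity of the numerical certificate by appealing to LP vertex solutions, which the paper leaves implicit. Second, for \mffa you introduce the per-edge orientation guess to linearize the bilinear power law; this is a genuine additional idea not present in the paper, and it cleanly justifies why a polynomial-size rational witness exists despite the non-linear constraint. The paper's terse treatment is standard and acceptable in context, but your version is the one that would survive a referee asking ``why is the certificate polynomial-size?''.
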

NP-hard problems are about (discrete) choices: for example, the choice between true or false for a SAT variable.
To prove NP-hardness, we construct sub-networks that represent such choices, 
that we call \emph{generator choice networks} (GCN). 
A GCN can be regarded as a black box with a port $v$ connected to the rest of the network.  
Given an $x \in \preals$, the power generation within the GCN will be maximal 
iff $v$ acts as generator (for the rest of the network)
that provides a power of exactly $0$ or $x$: 
the decision at each black box will therefore be 
which one of these two values the box provides.  
Each optimization problem has its own GCN: 
the \gschn for the \msfa problem and the \gfchn for the \mffa problem.  
Both types are defined below and Table~\ref{tab:cn} displays both of them and their graphical representation.

\begin{table}[h]
    \caption{\label{tab:cn} The networks \gscha and \gfcha.  
    Any connection from the rest of the network to these networks is through $v$ 
    where $v$ is no longer a load.}
    \centering
    \begin{tabular}{|c|c|c|c|}
        \hline 
        & Graph. Rep. & \gsch[x,v][-] & \gfch[x,v][-] \\
        \hline
        \raisebox{4.3\height}{Network}  & \raisebox{1.5\height}{\ipic{grcn}} & \ipic{gsch} & \ipic{gfch} \\ 
        \hline
        \raisebox{5.3\height}{$L(v) = 0$} & \raisebox{2\height}{\ipic{grcn0}} & \ipic{gsch_msf_1} & \ipic{gfch_mff0} \\ 
        \hline 
        \raisebox{5.3\height}{$L(v) = x$} & \raisebox{1.9\height}{\ipic{grcnx}} & \ipic{gsch_msf_2} & \ipic{gfch_mff1} \\ 
        \hline 
    \end{tabular} 
\end{table}

\begin{definition}[\gschn, \gscha]
Let $x \in \preals$. The \emph{\gschn for $x$ with connector $v$} $\gsch[x,v] := (\buses, \generators, \loads, \lines)$ is an \dca defined by $\generators := \{g\}$, $\loads := \{l\}$, $\buses := \generators \cup \loads \cup \{v\}$ and $\lines := \{ \edge[g][v][1][x], \edge[g][l][1][2x], \edge[v][l][1][x]\}$. 
    Let \gsch[x,v][+] be the version of \gsch[x,v] where $v$ is a generator and \gsch[x,v][-] the version where $v$ is a load.
\end{definition}


\begin{definition}[\gfchn, \gfcha] 
    Let $x \in \preals$.
    The \emph{\gfchn for $x$ with connector $v$} $\gfch[x,v] := (\buses, \generators, \loads, \lines)$  is an \fdca defined by $\generators := \{g, e, t\}$, $\loads := \{l\}$, $\buses := \generators \cup \loads \cup \{v,c\}$ and 
    $\lines := \{\edge[g][v][1][x], \edge[e][v][[0.4,1.6]][0.4x], \edge[e][c][1][0.65x],\edge[v][c][1][0.9x],\allowbreak \edge[t][c][1][x],\allowbreak \edge[t][l][1][3.55x], \edge[c][l][1][2.55x]\}.$
    Let \gfch[x,v][+] be the version of \gfch[x,v] where $v$ is a generator and \gfch[x,v][-] the version where $v$ is a load.
\end{definition}


\begin{lemma}
\label{lem:gsch}
Let $x \in \preals$ and \gsch[x,v] be the \gscha. We have 
\begin{enumerate}
    \item $\msf[\gsch[x,v][-]] = 3x$;
    \item $\{ \load[v] \mid \swsol \text{ is an \optsoln of } \msf[\gsch[x,v][-]]\} = \{0, x\}$;
    \item for every \optsoln of \msf[\gsch[x,v][+]] we have $\gen[v] > 0 \implies \gen[g] < 3x$.
\end{enumerate}
\end{lemma}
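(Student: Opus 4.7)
To prove parts (1) and (2), the plan is to enumerate the $2^3 = 8$ sub-networks of $\gsch[x,v][-]$ obtained by switching off subsets of its three edges and to determine the MPF of each. Set $f_1 = \flow[g][v]$, $f_2 = \flow[v][l]$, $f_3 = \flow[g][l]$; in the full network the \dca power law (summing phase-angle differences around the triangle $g{-}v{-}l{-}g$, all susceptances being $1$) forces $f_3 = f_1 + f_2$, with capacity bounds $|f_1| \le x$, $|f_2| \le x$, $|f_3| \le 2x$. Kirchhoff's law at $g$ then gives $\gen[g] = f_1 + f_3 = 2f_1 + f_2 \le 3x$, and equality uniquely forces $f_1 = f_2 = x$, so $\load[v] = f_1 - f_2 = 0$.

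Every other sub-network is acyclic, so its MPF reduces to a classical max-flow computation from $g$ to $\{v, l\}$. A short case analysis shows: removing $\edge[v][l]$ gives MPF $= 3x$, uniquely attained by saturating $\edge[g][v]$ and $\edge[g][l]$ inbound to their loads, forcing $\load[v] = x$; removing $\edge[g][v]$ gives $2x$; removing $\edge[g][l]$ gives $x$; and removing two or three edges gives at most $2x$. Combining these with the unswitched value $3x$ proves (1) and shows that in every \optsoln either the unswitched configuration is used (yielding $\load[v] = 0$) or $\edge[v][l]$ is switched off (yielding $\load[v] = x$), establishing (2).

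For (3), I would perform the analogous enumeration on $\gsch[x,v][+]$, where $v$ is a generator so Kirchhoff at $v$ gives $\gen[v] = f_1 - f_2$. In the full network $\gen[g] + \gen[v] = 3 f_1 \le 3x$; removing $\edge[g][v]$ leaves a star centred at $l$ whose total incoming capacity is $x + 2x = 3x$; every other switching yields strictly less. Hence $\msf[\gsch[x,v][+]] = 3x$. Any \optsoln must attain this total, so $\gen[g] = 3x - \gen[v]$, and $\gen[v] > 0$ immediately yields $\gen[g] < 3x$.

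The main obstacle is not the computation itself---each LP is tiny---but the bookkeeping needed for \emph{uniqueness} of $\load[v]$ in the optimal configurations of the $-$ variant, so that (2) genuinely excludes any $\load[v] \notin \{0, x\}$. This reduces to checking that each of the two optimal LPs is tight at all relevant capacity constraints and therefore has a unique maximizer; the argument is elementary but needs to be written out carefully edge by edge.
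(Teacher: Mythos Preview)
Your argument is correct and reaches the same conclusions as the paper, but the organisation is noticeably different. The paper's proof of parts (1) and (2) is a two-line observation: since the only edges leaving $g$ have combined capacity $3x$, achieving $\gen[g]=3x$ forces both $\edge[g][v]$ and $\edge[g][l]$ to be present and saturated; then either $\edge[v][l]$ is kept (the cycle constraint forces $\flow[v][l]=x$, hence $\load[v]=0$) or it is switched off (hence $\load[v]=x$). Your exhaustive enumeration of the eight switching subsets arrives at the same place but makes the exclusion of every non-optimal configuration explicit, which is exactly the bookkeeping you flag as the delicate point for the set equality in (2); this buys you a fully written-out uniqueness argument at the cost of some length. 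For part (3) the paper argues informally that any positive generation at $v$ must reduce the flow on $\edge[g][v]$ and hence $\gen[g]$; your route---first establishing $\msf[\gsch[x,v][+]] = 3x$ and then reading off $\gen[g] = 3x - \gen[v]$ from total conservation---is cleaner and sidesteps any case analysis over switched edges.

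One minor slip to fix: with your orientation conventions, Kirchhoff at $v$ in the $+$ variant gives $\gen[v] = f_2 - f_1$, not $f_1 - f_2$, so in the full network $\gen[g]+\gen[v] = f_1 + 2f_2$ rather than $3f_1$. Both expressions are bounded by $3x$, and your subsequent conservation argument for part (3) does not depend on this formula, so the conclusion is unaffected.
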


\begin{lemma}
\label{lem:gfch}
Let $x \in \preals$ and \gfch[x,v] be the \gfcha. We have 
\begin{enumerate}
    \item $\mff[\gfch[x,v][-]] = 6.1x$;
    \item $\{ \load[v] \mid \sol \text{ is an \optsoln of } \mff[\gfch[x,v][-]]\} = \{0, x\}$;
    \item for every \optsoln of \mff[\gfch[x,v][+]] we have $\gen[v] > 0 \implies \gen[g] + \gen[e] + \gen[t] < 6.1x$.
\end{enumerate}
\end{lemma}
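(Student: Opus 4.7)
The plan is to analyze the linear program corresponding to the \mffa on this seven-edge network. After fixing the gauge $\pa[l]=0$ and using Kirchhoff at node $c$ together with the two fundamental cycles (around $t$-$l$-$c$ and around $e$-$v$-$c$, the latter carrying the FACTS susceptance $s \in [0.4,1.6]$), the total generation $\gen[g] + \gen[e] + \gen[t]$ reduces to the linear expression $\flow[g][v] + (2+s)\flow[e][c] + (1-s)\flow[v][c] + 3\flow[t][c]$, while the constraints on the eliminated edges become the derived bounds $|s(\flow[e][c]-\flow[v][c])| \le 0.4x$ (from $\edge[e][v]$), $|\flow[v][c]+\flow[e][c]+\flow[t][c]| \le 2.55x$ (from $\edge[c][l]$), and $|2\flow[t][c]+\flow[v][c]+\flow[e][c]| \le 3.55x$ (from $\edge[t][l]$).

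For claim (1), the lower bound $\mff \ge 6.1x$ comes from an explicit feasible solution with $s=0.4$ and flows $\flow[g][v] = x$, $\flow[e][c] = 0.65x$, $\flow[v][c] = 0.9x$, $\flow[t][c] = x$; a direct check shows all derived bounds hold and the objective evaluates to $6.1x$. For the matching upper bound I split on $s$: for $s \le 1$, summing the four capacity bounds weighted by the (nonnegative) objective coefficients yields $(6.2 - 0.25 s)x \le 6.1x$; for $s \ge 1$ the coefficient of $\flow[v][c]$ is nonpositive, so I instead combine $\flow[g][v]\le x$, $3\flow[e][c]\le 3\cdot 0.65x$, the FACTS bound $s(\flow[e][c]-\flow[v][c])\le 0.4x$ with weight $1 - 1/s$, and $3\flow[t][c]\le 3x$ to obtain $(6.35 - 0.4/s)x \le 6.1x$.

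For claim (2), both case bounds are strict for $s \in (0.4,1.6)$, so any optimum must take $s \in \{0.4,1.6\}$. At each extreme, tightness of the corresponding certificate forces every contributing inequality to hold with equality simultaneously, which together with the cycle and Kirchhoff equations uniquely pins down all seven flows. A direct computation of $\load[v] = \flow[g][v] + \flow[e][v] - \flow[v][c]$ then yields $\load[v]=0$ at $s=0.4$ and $\load[v]=x$ at $s=1.6$.

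For claim (3), observe that $\load[l] \le 6.1x$ is forced already by the capacities of $\edge[t][l]$ and $\edge[c][l]$ together with the cycle identity $\flow[t][l] = \flow[c][l] + \flow[t][c]$, independently of the FACTS device. In the $+$-variant $l$ is the only load, so total generation equals $\load[l] \le 6.1x$, and subtracting a strictly positive $\gen[v]$ gives the claim. The main obstacle is the upper bound in (1): identifying matching LP certificates for the two ranges of $s$, and in particular recognizing that for $s > 1$ one must exploit the FACTS cycle constraint rather than the raw capacity of $\edge[v][c]$.
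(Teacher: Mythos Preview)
Your argument is correct and arrives at the same core identity as the paper: after eliminating $\flow[t][l]$, $\flow[c][l]$, and $\flow[e][v]$ via Kirchhoff at $c$ and the two triangle cycles, total generation becomes $\flow[g][v] + (2+s)\flow[e][c] + (1-s)\flow[v][c] + 3\flow[t][c]$, which is exactly the paper's expression $\flow[g][v] + 3\flow[t][c] + 3\flow[e][c] + \flow[e][v]\bigl(1 - \tfrac{1}{s}\bigr)$ rewritten. Where you diverge is in certifying the upper bound. The paper first uses the achievability of $6.1x$ to deduce $\flow[e][v] \ge -0.1x$, and then bounds the nonlinear term $\flow[e][v]\tfrac{s-1}{s}$ by a case split on the \emph{sign of} $\flow[e][v]$, obtaining $0.15x$ with equality only at the two extreme pairs $(\flow[e][v],s)\in\{(0.4x,1.6),(-0.1x,0.4)\}$. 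You instead split on $s \lessgtr 1$ and give an explicit LP dual combination in each regime: for $s\le 1$ you use the raw capacity $\flow[v][c]\le 0.9x$, for $s\ge 1$ you use the FACTS capacity $s(\flow[e][c]-\flow[v][c])\le 0.4x$ with weight $1-\tfrac{1}{s}$. Your route is slightly cleaner in that it avoids the bootstrap through the lower bound, and it makes the uniqueness argument for claim~(2) immediate from complementary slackness. For claim~(3) your cut argument at $l$ (total generation $= \load[l] \le \capa[t][l]+\capa[c][l] = 6.1x$, hence $\gen[g]+\gen[e]+\gen[t]=6.1x-\gen[v]$ in any optimal solution) is more direct than the paper's, which argues via the specific congested-edges solution; note that the cycle identity you mention is not actually needed there, the two edge capacities alone suffice.
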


The following encodings sum up multiple GCNs together with a ``glue network''.
With the exception of the $v$ nodes, 
GCNs do not intersect one another or the glue network.  
We now use the properties of the \gfchas to show NP-hardness by reduction from the exact cover by 3-set problem. 
This problem is purely combinatorial and hence strongly NP-complete, which implies that there is no fully polynomial time approximation scheme \cite{michael1979computers}.

\begin{theorem}
    \label{theo:strongly}
The \mffa problem is strongly NP-complete.
\end{theorem}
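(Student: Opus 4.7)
The plan is to pair membership in NP (Lemma~\ref{lem:innp}) with a polynomial-time reduction from \textsc{Exact Cover by 3-Sets} (X3C), which is strongly NP-complete. Given an X3C instance $(U, \mathcal{C})$ with $|U| = 3q$ and each $S \in \mathcal{C}$ of size exactly three, I would construct an \fdca $\net$ and a threshold $T$ such that $\mff \geq T$ iff there is a subfamily $\mathcal{C}' \subseteq \mathcal{C}$ of cardinality $q$ whose members partition $U$.

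For the construction, I would take disjoint copies of $\gfch[x,v_S][+]$ for every $S \in \mathcal{C}$ (sharing only the naming of the connectors) and bolt on a ``demand layer'': for each element $u \in U$ introduce a fresh load node $l_u$ with demand $x$, and for every incidence $u \in S$ add an edge between $v_S$ and $l_u$ of capacity $x$ and a fixed, uniform susceptance. The intended semantics, supplied by Lemma~\ref{lem:gfch}, is the binary choice ``$v_S$ exports $x$'' $\leftrightarrow$ ``$S$ belongs to the cover''. I would then calibrate $T$ so that matching it forces (i) every internal gadget load to be served, (ii) every $l_u$ to be fully served, and (iii) exactly $q$ of the gadgets to operate in the ``ON'' mode $\gen[v_S]=x$; Lemma~\ref{lem:gfch}(3) guarantees that the internal generators of every ON gadget produce strictly less than $6.1x$, so $T$ must be set precisely at the break-even value where the external delivery of $qx$ just compensates for the $q$ internal penalties. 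Soundness then follows from Lemma~\ref{lem:gfch}(2)--(3): any \soln achieving $T$ must have $\gen[v_S] \in \{0,x\}$ for every $S$, and the capacity-$x$ edges together with the demand $x$ at each $l_u$ force each element to be fed by exactly one selected set, i.e., by an exact cover. Conversely, an exact cover can be converted back into a \soln that reaches $T$ by switching the corresponding gadgets into the ON mode.

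The main obstacle will be the quantitative calibration of the glue layer. Lemma~\ref{lem:gfch}(3) only asserts a strict inequality, so I would first refine it to give the exact maximum of $\gen[g]+\gen[e]+\gen[t]$ as a function of $\gen[v_S]$---presumably an affine or concave envelope that is tight only at the two extremal values $\gen[v_S] \in \{0,x\}$---and use this to pin down the contribution of each gadget in every operating mode. With that in hand, the capacities and the shared susceptance on the $v_S$-to-$l_u$ edges have to be chosen so that no phase-angle adjustment can route more than $x$ into any $l_u$, ensuring that duplicate coverage is always strictly dominated by an exact cover and that partial exports $0 < \gen[v_S] < x$ can never beat the clean extremes. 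Once these constants are fixed, the reduction uses polynomially many buses, edges, and rational numbers with polynomially bounded encodings, which combined with the strong NP-completeness of X3C yields strong NP-completeness of the \mffa problem.
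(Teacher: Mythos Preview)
Your overall plan---reduce from X3C with one \gfcha per triple and a glue layer for the elements---is the paper's plan, but you have inverted the role of the connector and thereby manufactured a trade-off that does not exist in the correct construction. You attach $\gfch[x,v_S][+]$ and reason that in the ON mode ``$\gen[v_S]=x$, internal generation drops below $6.1x$, and the threshold must be set at the break-even point''. In the $+$ variant, however, $v_S$ is a generator \emph{for the gadget}: once glued, power would flow from the rest of the network \emph{into} the gadget, not out of it, so your ``export'' semantics is backwards. Lemma~\ref{lem:gfch}(3) is used by the paper only to rule this direction out. The paper instead uses the $-$ variant and Lemma~\ref{lem:gfch}(1)--(2): the gadget attains $\mff[\gfch[x,v_S][-]]=6.1x$ in \emph{both} modes, with $v_S$ absorbing exactly $0$ or exactly $x$; in the combined network this means every gadget contributes a flat $6.1x$ of generation and additionally exports $0$ or $x$ through $v_S$. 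There is no penalty to offset, so no quantitative sharpening of Lemma~\ref{lem:gfch}(3) is needed, and the threshold is simply $3+6.1x\cdot|S|+|M|$ (with $x=3$).

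Your demand layer is also too loose to force exactness. The paper's glue is a single generator $g$, a single load $l$, one node per element, and edges $\edge[g][l][1][3]$, $\edge[g][x][1][1]$, $\edge[x][l][1][2]$, together with unit-capacity edges $\edge[v_X][x]$ for each incidence $x\in X$. At the target value all edges out of $g$ are congested, which pins $\pa[g]=0$, $\pa[l]=3$, and $\pa[x]=1$ for every element; since all element nodes share the same phase angle, no flow can pass between them, and Kirchhoff at each element node then forces exactly one incident $v_X$ to supply one unit. This phase-angle pinning is what enforces the ``exact'' in exact cover; a layer of independent load nodes with ``demand $x$'' and capacity-$x$ edges does not by itself prevent an element from being fed by several partially active gadgets.
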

\begin{proof}
    We prove this by reduction from the \emph{exact cover by 3-set problem}. Given a set $M$ and a set of subsets $S \subseteq \powerset{M}$ where every element of $S$ has exactly 3 elements, decide if there exists a set $T \subseteq S$ such that $\bigcup_{X \in T} X = M$ and $\forall X_1, X_2 \in T: X_1 \neq X_2 \implies X_1 \cap X_2 = \emptyset$.

For an instance $(M,S)$, we define the \fdca $\net[*][M,S] := (\buses, \generators, \loads, \lines)$ with $\generators := \{g\}$, $\loads := \{l\}$, $\buses := \generators \cup \loads \cup \bigcup_{X \in S} \{v_X\} \cup \bigcup_{x \in M} \{x\}$ and 
$\lines := \{\edge[g][l][1][3]\} \cup \bigcup_{X \in S} \bigcup_{x \in X} \{\edge[v_X][x][1][1]\}) \cup \bigcup_{x \in M} \{ \edge[g][x][1][1], \edge[x][l][1][2]\}.$
We then define $\net[][M,S] = \net[*][M,S] + \sum_{X \in S} \gfch[3, v_X]$ and we have:
$\mff[\net[][M,S]] = 3 + 18.3|S| + |M| \iff (M,S) \text{ is solvable}.$

An example encoding for the exact cover problem $(M,S) = (\{a,b,c,d,e,f\},\allowbreak \{\{a,b,c\},\allowbreak \{b,c,d\},\allowbreak \{d,e,f\}\})$ can be found in Figure~\ref{exact_cover} and an \optsoln in Figure~\ref{exact_cover_mf}.
\pics[Example encoding for $(M,S) = (\{a,b,c,d,e,f\}, \{\{a,b,c\}, \{b,c,d\},\allowbreak \{d,e,f\}\})$.]{
	\subpic{The network \net[][M,S]}{exact_cover}
	\subpic{A solution of \net[][M,S]}{exact_cover_mf}
}[pic:strongly]
\qed
\end{proof}
We can use almost the same proof for the \msfa: it suffices to replace \gfcha with \gscha and changing the number $18.3$ to $9$.

We now present an alternative reduction for the switching configuration problem 
that provides a stronger result: switching for planar networks is strongly NP-complete.
We use the Hamiltonian Path problem which, given two nodes $a$ and $b$ in a graph, 
consists in deciding whether there exists a path that starts in $a$, ends in $b$ and visits every node exactly once.
The Hamiltonian circuit problem is strongly NP-complete even for planar and cubic graphs \cite{garey1976planar} 
and the generalization to the Hamiltonian Path problem is trivial.

\begin{theorem}
\label{theo:hamilton}
The \msfa problem for planar networks with max degree of $3$ is strongly NP-complete.
\end{theorem}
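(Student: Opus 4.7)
The plan is to reduce the Hamiltonian Path problem on planar cubic graphs, which is strongly NP-complete \cite{garey1976planar}, to the planar degree-$3$ \msfa problem. Given an instance $(G=(V,E),a,b)$ with $G$ planar and 3-regular, I will construct in polynomial time a planar \dca \net[G] of maximum degree $3$ together with a threshold $\tau$ such that $G$ admits a Hamiltonian $a$-$b$ path if and only if $\msf[\net[G]]\geq\tau$. Combined with Lemma~\ref{lem:innp} this gives strong NP-completeness.

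The construction preserves the graph structure of $G$ and enriches each vertex with a small \emph{node gadget}. For every $v\in V$ I would replace $v$ by a planar subnetwork $N_v$ with three external ports (one per edge of $G$ incident to $v$), internal maximum degree $3$, and a calibrated generator--load pair whose internal served load attains its maximum precisely when exactly two of the three ports are retained by the switching choice---that is, when $v$ appears as an interior vertex of a path. The endpoints $a$ and $b$ get end-of-path variants in which the boundary generator (respectively the boundary load) is folded inside the gadget so that the external degree never exceeds $3$. Each edge $\{u,v\}\in E$ then becomes a single edge \edge[p_u][p_v][1][1] joining the appropriate ports of $N_u$ and $N_v$. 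Planarity and the degree bound of \net[G] follow from planarity and cubicity of $G$.

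For the easy direction, given a Hamiltonian $a$-$b$ path $v_0v_1\ldots v_{n-1}$, the switching choice that keeps exactly the $n-1$ path edges plus all intra-gadget edges leaves a sub-network that is a chain of gadgets; inside such a chain one can assign phase angles monotonically and satisfy Kirchhoff's conservation law and the \dca power law simultaneously, so each gadget delivers its full contribution and the total generation reaches $\tau$.

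The main obstacle is the converse: showing that any switching choice not corresponding to a Hamiltonian $a$-$b$ path strictly loses more than the slack hidden in $\tau$. Two failure modes must be controlled. First, a retained edge set that disconnects some gadget $N_v$ from the main generator/load incurs a fixed per-gadget shortfall. Second, a retained edge set that contains a cycle forces a non-trivial \dca power-law relation around that cycle---exactly the Braess-type phenomenon illustrated in Figure~\ref{triangle_mpf}---and therefore strictly reduces the flow that can be pushed through the bottleneck. The delicate part will be calibrating the capacities and susceptances inside $N_v$ so that each such penalty strictly dominates any possible gain elsewhere, in the spirit of but without relying on the high-degree \gschas used in Theorem~\ref{theo:strongly}, since here we must stay within planarity and the degree-$3$ bound. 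Once that calibration is in place, $\msf[\net[G]]\geq\tau$ forces the retained edge set to be connected, acyclic, and to span $V$, i.e.\ a Hamiltonian $a$-$b$ path, closing the reduction.
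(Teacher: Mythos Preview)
Your outline is not a proof but a programme: the gadget $N_v$ on which everything hinges is never exhibited, and every correctness claim (``attains its maximum precisely when exactly two of the three ports are retained'', ``each such penalty strictly dominates any possible gain'') is deferred to an unspecified calibration. That is the gap. Two concrete difficulties you would have to overcome: (i)~the behaviour of $N_v$ cannot depend only on \emph{which} port edges are kept, since the phase angles at the ports are imposed by the rest of the network, so a purely local ``exactly two ports'' analysis is insufficient without an accompanying global phase-angle argument; (ii)~in the disjoint-cycle case, if every gadget carries its own generator/load then a cycle of gadgets forms a separate component, and the cycle constraint on phase angles forces a shortfall only if each gadget is designed to require a strictly positive phase-angle drop between its two active ports---a property you would have to build in and verify. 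You also need $N_v$ to be robust against switching \emph{inside} the gadget, which \msfa permits.

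The paper's route is far simpler and avoids per-vertex gadgets entirely. It keeps the planar cubic graph intact (all edges with susceptance~$1$ and capacity~$1$), hangs a single generator $v_0$ off $a$ and a single load $v_{n+1}$ off $b$, and adds one ``yardstick'' path of $n{+}1$ unit edges from $v_0$ to $v_{n+1}$ through fresh vertices $v_1',\ldots,v_n'$. To reach total generation~$2$, both edges at $v_0$ must be saturated; saturating the yardstick forces $\pa[v_{n+1}]-\pa[v_0]=n{+}1$ and hence $\pa[b]-\pa[a]=n{-}1$. Since every retained edge of the original graph allows a phase-angle drop of at most~$1$, any surviving $a$--$b$ path must contain all $n$ vertices, i.e., be Hamiltonian; conversely, switching off all non-path edges leaves an acyclic network in which these phase angles are feasible. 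The whole argument is a single global length-versus-phase-angle comparison with uniform parameters, and planarity and the degree bound are immediate because only a pendant path and two pendant edges are added.
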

\begin{proof}
We prove that by reduction from the $a-b$ Hamiltonian Path problem. 
Let $(\buses[h], \lines[h])$ be an arbitrary graph with $a, b \in \buses[h]$ and $\buses[h] = \{v_1,\ldots v_n\}$ with $a = v_1$ and $b = v_n$. 
We define additional nodes $v_0, v_{n+1}$ and $v_i'$ where $0 \leq i \leq n+1$ and we set $v_0 = v_0'$ and $v_{n+1} = v_{n+1}'$.
The encoding of $(\buses[h], \lines[h])$ into an \dca is $\net[][\buses[h], \lines[h], a, b] := (\buses, \generators, \loads, \lines)$ with $\generators := \{v_0\}$, $\loads := \{v_{n+1}\}$, $\buses := \buses[h] \cup \{ v_i' \mid 0 \leq i \leq n+1 \}$ and $\lines := \{ \edge[c][d][1][1] \mid \edge[c][d] \in \lines[h]\} \cup \{\edge[v_0][v_1][1][1], \edge[v_n][v_{n+1}][1][1]\} \cup \{ \edge[v_i'][v_{i+1}'][1][1] \mid 0 \leq v \leq n\}$.
We have: 
$\msf[\net[][\buses[h], \lines[h], a, b]] = 2 \iff (\buses[h], \lines[h], a, b)$ has a Hamiltonian Path from $a$ to $b$.

Figure~\ref{pic:hamilton} shows an example encoding for a graph with four nodes.
\pics[Example for the Graph $(N,E) = (\{a,c,d,b\},\{\edge[a][c],\allowbreak \edge[a][d], \allowbreak \edge[c][d], \allowbreak \edge[c][b],\allowbreak  \edge[d][b]\}).$]{
	\subpic{The network \net[][N,E]}{hamilton}
	\subpic{A solution for \net[][N,E]}{hamilton_msf}
}[pic:hamilton]
\qed
\end{proof}
Note that the proof uses the same capacity and the same susceptance value for all edges 
(here $1$, but these values can be chosen arbitrarily).
Hence, the complexity of the problem is not so much due to the interaction between different parameter values 
as in the Subset Sum problem we use in the next section,
but strictly results from the discrete aspects introduced by allowing for switching edges.
In fact, we need edge capacities only at the edges of the generator; 
the reduction would still be valid if all other edges were allowed unlimited capacity.

\section{Beyond Trees}
\label{sec:beyond_trees}
As shown in Section~\ref{sec:general_complexity}, the \mffa and \msfa problems are NP-complete 
and cannot be arbitrarily closely approximated in polynomial time in general.
But real world power networks are not arbitrary graphs: 
for instance, their maximum node degree is limited and they are (almost) planar networks, 
hence the class of real world power networks could still be easy to configure optimally.  
Consequently, we study the complexity of \msfa and \mffa problems 
for restricted classes of graphs.  

We start with a tree structure, which is overly restrictive since
real networks need to be meshed to be able to sustain failures.  
The reconfiguration problems are easy on tree networks.  
\begin{lemma}
    \label{lem:tree}
The \msfa and the \mffa problem for tree networks can be solved in polynomial time.
\end{lemma}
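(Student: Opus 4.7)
The plan is to show that on a tree the DC power law imposes no additional global constraint beyond what Kirchhoff's conservation law and the capacity bounds already force, so all three optimization problems collapse to standard max-flow on a tree with a super-source and super-sink.

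First I would observe that, given a tree \net and any edge flow \flows satisfying Kirchhoff's conservation law together with $|\flow[a][b]|\leq\capa[a][b]$, a consistent \pan assignment always exists for any fixed susceptance function \suses{}. The construction is direct: root the tree at an arbitrary node $r$, set $\pa[r]=0$, and then process the nodes in BFS order, defining $\pa[b]:=\pa[a]+\flow[a][b]/\sus[a][b]$ whenever $b$ is the child of $a$. Since the tree is acyclic this propagation is well-defined (every non-root node has a unique parent) and yields a tuple $(\suses,\pas,\flows,\gens,\loadsy)$ satisfying the \dca power law by construction. Consequently, for a tree the set of feasible flows of the \fdca coincides with the set of capacity-respecting Kirchhoff flows; in particular the susceptance values can be set arbitrarily within $\suscaps$ and FACTS devices give no advantage.

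Next I would handle the three optimization problems. For \mpfa, the previous step shows that \mpf equals the optimum of the classical multi-commodity max-flow problem on \net where \generators are sources (with infinite injection), \loads are sinks (with infinite absorption), and every edge has capacity \capas{}; adding a super-source connected to each $g\in\generators$ and a super-sink connected from each $l\in\loads$ with sufficiently large capacity reduces this to single-source single-sink max-flow, which is solvable in polynomial time (in fact linearly on a tree by a single leaf-to-root sweep). For \mffa, since FACTS devices only enlarge the feasible region over \mpfa and the \mpfa feasible region already equals the classical max-flow feasible region, $\mff[\net]=\mpf[\net]$ and the same algorithm applies. For \msfa, removing any edge from the tree disconnects it and can only reduce the set of feasible Kirchhoff flows, so $\msf[\net]=\mpf[\net[\emptyset]]=\mpf[\net]$ and again the classical max-flow algorithm suffices.

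The ``hard part'' is really just the phase-angle propagation argument, and it is not hard: the crucial fact is that the DC power law produces one constraint per cycle (namely that the weighted flow around the cycle is zero), and a tree has no cycles. Once this is made precise, the rest is routine reduction to classical max-flow on a tree. I would conclude with the complexity bound, remarking that even a naive LP formulation of \mpf is polynomial, so the reduction immediately gives a polynomial-time algorithm for all three variants.
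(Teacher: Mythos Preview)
Your proposal is correct and follows essentially the same approach as the paper: the key observation is that a tree has no cycles, so phase angles can be propagated (e.g., by your BFS construction) to realize \emph{any} capacity-respecting Kirchhoff flow, reducing all three problems to classical max-flow. The paper's proof states exactly this idea in two sentences; your version is a more detailed elaboration of the same argument (one small quibble: the resulting problem is single-commodity with multiple sources and sinks, not multi-commodity, but your super-source/super-sink reduction handles this correctly anyway).
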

\begin{proof}
This is a consequence of the fact that, in the absence of cycles, there are no cyclic dependencies on the phase angles. 
Hence they can be chosen in such a way as to match any \optsoln of the traditional max flow.
\qed
\end{proof}

Because the complexity of our problems is driven by cycles, cacti are an obvious relaxation of trees to study.
They allow for cycles but every edge can only put constraints on at most one cycle.
Both problems turn out to be NP-complete for cacti.

\definecolor{two}{HTML}{40E0D0}
\definecolor{fuchs}{HTML}{FF00FF}
\definecolor{olive}{HTML}{808000}
\definecolor{orang}{HTML}{FFA500}
\definecolor{brow}{HTML}{A52A2A}	
\begin{theorem}
    \label{theo:cactusmsf}
The \msfa problem for cactus networks is NP-complete.
\end{theorem}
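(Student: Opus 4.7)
The approach is to reduce from Subset Sum, which is NP-complete. Given an instance $(a_1,\ldots,a_n,T)$ with $A=\sum_i a_i$, I would build a cactus network by attaching one $\gsch$ gadget per element to an acyclic skeleton. Concretely, introduce a path $g-w_1-w_2-\cdots-w_n-l$ where $g$ is the main generator and $l$ is the main load; set the capacity of the bottleneck edge $(g,w_1)$ to $T$, and set every remaining skeleton edge to capacity $T$ as well. At each $w_i$, attach the gadget $\gsch[a_i,w_i][-]$ by identifying its connector with $w_i$ (so $w_i$ becomes a plain interior node in the combined network).

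The construction is a cactus: the skeleton is a path (no cycles), each gadget contributes a single triangular cycle, and those triangles share only a single vertex $w_i$ (never an edge) with the skeleton or with each other; hence every edge lies in at most one cycle. NP-membership follows from Lemma~\ref{lem:innp}, so the remaining task is the NP-hardness reduction, which amounts to proving $\msf\ge 3A+T$ iff the Subset Sum instance is solvable.

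For the easy direction, given a subset $S$ with $\sum_{i\in S}a_i=T$, I would construct a feasible solution of value $3A+T$: for each $i\in S$, switch edge $w_i-l_i$ of gadget $i$ and pick the phase angles realising the MSF-optimal configuration with exchange $a_i$ at $w_i$ (this is exactly the ``active'' MSF of Lemma~\ref{lem:gsch} item~2, contributing $3a_i$); for each $i\notin S$, use the unswitched MSF-optimal configuration contributing $3a_i$ with no exchange. The skeleton then carries exactly $T$ units from $g$ along the path, saturating $(g,w_1)$ and distributing $a_i$ units to each $w_i$ with $i\in S$, so the total generation is $\gen[g]+\sum_i\gen[g_i]=T+3A$.

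The converse upper bound is the main obstacle. For each gadget let $L_i$ denote the net flow the skeleton supplies to the gadget at $w_i$. A case analysis over all eight switching subsets of the three triangle edges, combined with the capacity and phase-angle constraints inside the triangle, yields a piecewise-linear upper bound on the per-block contribution $L_i+\gen[g_i]$, of the form $3a_i+\delta_i(L_i)$ with $\delta_i(L_i)\le L_i$ and strict inequality whenever $L_i\notin\{0,a_i\}$. Summing and using the skeleton bound $\sum_i L_i\le T$ from the bottleneck capacity then gives $\msf\le 3A+T$, with equality forcing $L_i\in\{0,a_i\}$ for every $i$ and $\sum_{L_i=a_i}a_i=T$, which is exactly a Subset Sum certificate. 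The technically most delicate step is calibrating the intermediate path capacities (all equal to $T$ in my plan) so that partial sums of the $L_i$ remain bounded and no mixed ``signed'' configuration in which some gadget pushes flow back into the skeleton while another absorbs an even larger amount can fraudulently attain the bound; once the path caps rule out such shortcuts, the $\delta_i\le L_i$ inequality forces the answer to coincide with ordinary Subset Sum rather than any easier signed variant.
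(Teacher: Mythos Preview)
Your high-level plan---reduce from Subset Sum by hanging one \gscha gadget per element off an otherwise simple skeleton so that the result is a cactus---is exactly the paper's approach. The specific skeleton you propose, however, does not work, and the gap is not a calibration detail but a structural one.

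With your path $g\!-\!w_1\!-\!\cdots\!-\!w_n\!-\!l$ and all skeleton capacities equal to $T$, the target $3A+T$ is \emph{always} attainable, independently of the Subset Sum instance: put every gadget in its zero-exchange optimal configuration (all three triangle edges unswitched, so $\gen[g_i]=3a_i$ and the net flow at $w_i$ is $0$) and route exactly $T$ units from $g$ to $l$ along the path. Every skeleton edge then carries precisely $T$, every gadget is at its \msfa, and total generation is $T+3A$. Your reduction is therefore vacuous. Relatedly, your forward direction has the exchange direction backwards: by Lemma~\ref{lem:gsch} the gadget \emph{supplies} $0$ or $a_i$ to the skeleton at $w_i$ (it does not absorb $a_i$ from it), so ``distributing $a_i$ units to each $w_i$'' cannot be what happens; and if you instead let active gadgets inject $a_i$ into the path while $g$ still pushes $T$, the later path edges carry strictly more than $T$ and violate capacity.

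The missing idea is that a purely acyclic skeleton cannot force the gadget injections to sum to a prescribed value, because on a tree the phase angles along the skeleton are unconstrained. The paper's construction inserts one additional cycle in the skeleton---a triangle on $g$, $l$, and the first connector $v_1$ with edges $\edge[g][l][1][2+w]$, $\edge[g][v_1][1][1]$, $\edge[v_1][l][1][w+1]$---so that saturating the two edges out of $g$ (required to reach the target $3+w+3m$) pins down the phase angles and forces $\edge[v_1][l]$ to carry exactly $w+1$. Since $g$ supplies only $1$ unit to $v_1$, the remaining $w$ must be injected by the gadgets along the pendant path $v_1\!-\!\cdots\!-\!v_n$, which is what couples the optimum to Subset Sum. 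Your per-gadget inequality $\delta_i(L_i)\le L_i$ is in the right spirit for the upper bound, but it cannot rescue the argument once the target is attainable without any gadget activation; you need the skeleton cycle first.
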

\begin{proof}
The proof is by reduction from the \emph{Subset sum problem}. 
Given a finite set $M \subset \naturalnumbers$ and a $w \in \naturalnumbers$:
the {Subset sum problem} is to decide whether there is a $V \subseteq M$ such that $\sum_{x \in V} x = w$. 
If such a subset exists then we call the problem $(M,w)$ \emph{solvable}.

Let $(M,w)$ be an instance of the Subset sum problem with $M = \{ x_1, \ldots x_n \}$ and let $m := \sum_{x \in M} x$.
To encode the problem we use the network $\net[*][M,w] := (\buses, \generators, \loads, \lines)$ where $\generators := \{g\}$, $\loads := \{l\}$, $\buses := \generators \cup \loads \cup \bigcup_{1 \leq i \leq n} \{v_i\}$ and 
$\lines := \{ \edge[g][l][1][2+w], \edge[g][v_1][1][1], \edge[v_1][l][1][w+1]\} \cup \bigcup_{1 \leq i < n} \{ \edge[v_i][v_{i+1}][1][w] \}.$
We define $\net[M,w] := \net[*][M,w] + \sum_{1 \leq i \leq n} \gsch[x_i,v_i]$ and we have: 
$\msf[\net[M,w]] = 3 + w + 3m \iff (M,w) \text{ is solvable}$.

\pics[Example for $(M,w) = (\{\textcolor{olive}{1}, \textcolor{two}{2}, \textcolor{fuchs}{3} \}, \textcolor{orang}{5})$]{
	\subpic{The network \net[][M,w]}{cactus2}
	\subpic{A solution for \net[][M,w]}{cactus2_mf}
}[pic:cactus]
Figure~\ref{pic:cactus} shows an example encoding for $(M,w) = (\{1,2,3\}, 5)$ where the dotted edge symbolizes that the node $v_i$ is the same as the node in the black box of \gsch[x_i,v_i].
\qed
\end{proof}

The same proof can be used for the \mffa problem by replacing \gsch[x_i,v_i] with \gfch[x_i,v_i] and changing $3m$ to $6.1m$.

\begin{theorem}
The \mffa problem for cactus networks is NP-complete.
\end{theorem}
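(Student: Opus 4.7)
The plan is to adapt the reduction of Theorem~\ref{theo:cactusmsf} essentially verbatim, replacing each \gscha by a \gfcha in the attachment step. For an instance $(M,w)$ of Subset Sum with $M = \{x_1, \ldots, x_n\}$ and $m := \sum_{x \in M} x$, I would keep the same glue network $\net[*][M,w]$ (an \dca, hence an \fdca) and set $\net[][M,w] := \net[*][M,w] + \sum_{i=1}^n \gfch[x_i, v_i]$. Because each \gfcha contributes $6.1 x_i$ internally by Lemma~\ref{lem:gfch}(1) in place of the $3 x_i$ contributed by a \gscha, the target value shifts from $3 + w + 3m$ to $3 + w + 6.1m$, and the claim to prove is
\[
\mff[\net[][M,w]] = 3 + w + 6.1m \iff (M,w) \text{ is solvable.}
\]

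First I would verify that $\net[][M,w]$ is still a cactus. The glue contributes the triangle $\{g, v_1, l\}$ and a pendant path $v_1, v_2, \ldots, v_n$; each attached \gfcha contributes two triangles that share only its own cut vertex $c$, and is joined to the rest of the network solely through $v_i$. Hence every edge lies in at most one simple cycle, and since each \gfcha is a constant-size gadget the construction is polynomial.

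For the equivalence I would mirror the previous theorem. In the $\Leftarrow$ direction, given $V \subseteq M$ with $\sum_{x \in V} x = w$, I would assemble an \fdca \soln by placing inside each \gfcha the configuration pictured in Table~\ref{tab:cn}: the blocks with $x_i \in V$ use the $L(v) = x$ row and the others use the $L(v) = 0$ row, so every block contributes $6.1 x_i$ to the total. The glue then routes $\gen[g] = 3 + w$ together with the $w$ injected by the $v_i$'s to $l$ along the triangle and pendant path, exactly as in the \msfa proof; because each \gfcha is attached only through the cut vertex $v_i$, its phase angles can be translated to match the glue's $\pa[v_i]$ without affecting internal flows. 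In the $\Rightarrow$ direction, suppose the total generation reaches $3 + w + 6.1m$. Each \gfcha contributes at most $6.1 x_i$ by the capacity of its edges into the internal load, and $\gen[g] \leq 3 + w$ by the capacities of $g$'s outgoing edges, so both bounds must be tight for every block. Tightness inside each \gfcha forces the net flow into the glue at $v_i$ to lie in $\{0, x_i\}$: Lemma~\ref{lem:gfch}(3) rules out any flow from the glue into the \gfcha (that would strictly drop internal generation below $6.1 x_i$), and Lemma~\ref{lem:gfch}(2) restricts the remaining case to the two values. The capacity of $\edge[v_1][l][1][w+1]$ together with the pendant-path edges of capacity $w$ then forces these injections to sum exactly to $w$, yielding a Subset Sum solution.

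The main obstacle I expect is the bookkeeping in the $\Rightarrow$ direction. In Theorem~\ref{theo:cactusmsf} the \gscha was a single triangle and the discrete $\{0, x_i\}$ choice was visibly forced by capacity saturation; the \gfcha has two triangles, an adjustable susceptance on the edge between $e$ and $v$, and extra slack through $c$, so the discrete choice is not immediate from the gadget picture. Lemma~\ref{lem:gfch}(2,3) is precisely designed to hide this complexity behind a clean black-box statement about $\load[v]$ and $\gen[v]$, so the only genuinely new work in my plan is the cactus-structure verification and checking that the glue-flow accounting carries over unchanged when the \gscha is replaced by the slightly larger \gfcha.
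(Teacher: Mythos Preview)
Your proposal is correct and follows exactly the route the paper takes: the paper's entire proof of this theorem is the one-line remark that the reduction of Theorem~\ref{theo:cactusmsf} goes through verbatim after replacing each \gscha by a \gfcha and the constant $3m$ by $6.1m$, with Lemma~\ref{lem:gfch} playing the role of Lemma~\ref{lem:gsch}. Your additional cactus-structure check (two triangles of the \gfcha sharing the cut vertex $c$, plus the pendant edge $\edge[g][v]$, attached to the glue only through $v_i$) is the only extra detail, and it is straightforward.
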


Note that both encodings have a fixed maximum degree. 
In the \msfa proof, the maximum degree of the encoding is $3$ as it only consists of triangles that are connected via a path.
In the case of the \mffa, the maximum degree of $4$ is at the node $c$ of the \gfchn.
However, this can be further reduced to $3$ by splitting $c$ into two nodes that are connected via an edge. 

N-level Tree networks are another possible relaxation of trees.
An n-level Tree network is an \dca based on a tree where there is one generator at the root and loads at the leaves.
Edges that are not part of the tree can only be added between nodes on the same tree level where the level is less or equal to $n$ and only such that the resulting graph is planar.
This network structure is motivated by the disaster management application.
After the destruction of many power lines, it is easier to first repair lines such that we obtain a tree structure.
Then we can start restoring additional edges.

\begin{definition}[n-level Tree network]
	Let $n \in \naturalnumbers$. 
    An \emph{n-level Tree network} is an \dca iff there exists a sub-network $T$ that is a Tree such that:
    all leaves of $T$ are loads;
    there is only one generator at the root node of $T$ and
    there is a total order on the children of every node (which implies a total order on all nodes in one level) such that
    every node of the same tree level can only be connected to its neighbours in the total order on all nodes of the same level.
\end{definition}


\begin{theorem}
    \label{theo:tree}
The \msfa problem for 2-level Tree networks is NP-complete.
\end{theorem}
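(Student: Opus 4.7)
My plan is a reduction from Subset Sum along the lines of Theorem~\ref{theo:cactusmsf}. NP-membership is immediate from Lemma~\ref{lem:innp}, so the task is NP-hardness. The immediate complication is that an $n$-level Tree network admits only a single generator (the root) and forces all leaves to be loads, so the \gschn -- which hides its own generator inside -- cannot be plugged in verbatim. I would therefore design a level-2 switching gadget that fits under a single level-1 parent $v_i$, consisting of two consecutive leaves $l_i^1, l_i^2$ joined to $v_i$ and to each other by a same-level edge. The triangle $\{v_i, l_i^1, l_i^2\}$ is driven entirely by flow coming down from $g$ through the tree edge $(g,v_i)$, and its susceptances, capacities, and leaf load limits are calibrated so that every optimal switching configuration pulls either $A$ or $A + x_i$ units through $(g, v_i)$, for some fixed baseline $A$.

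Given $(M,w)$ with $M = \{x_1, \ldots, x_n\}$ and $m := \sum_{x \in M} x$, the overall network has root $g$ at level $0$, a level-1 spine $v_1, \ldots, v_n$ each carrying one such gadget, and a separate level-1 branch ending in the global load $l$ through a bottleneck edge of capacity $w$. The ``exportable'' flows of the gadgets aggregate and then flow through the bottleneck into $l$, so the \msfa reaches a fixed target of the form $An + w + O(1)$ iff some $V \subseteq M$ sums to exactly $w$, mirroring the arithmetic of Theorem~\ref{theo:cactusmsf}. Correctness has two parts: a local lemma in the style of Lemma~\ref{lem:gsch} that pins each gadget to a binary $0$/$x_i$ contribution, and a global aggregation showing that these contributions add linearly and are capped at $w$ by the bottleneck.

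The principal obstacle is the local lemma. Because the gadget's cycle is driven externally rather than by an internal generator, I must choose the three triangle capacities, the two load limits on $l_i^1$ and $l_i^2$, and the susceptances so that (i) the \dca power law around the triangle together with the capacity constraints admit exactly two optimal switching configurations per gadget, (ii) these two configurations differ by exactly $x_i$ in the flow they draw from $g$, and (iii) no partial or ``mixed'' switching produces a better local trade-off. A further subtlety is verifying that the level-1 coupling through $g$ does not create additional cycles that break the independence of the $n$ gadget choices; I would avoid spine same-level edges and route all load consumption through the dedicated branch to $l$, so that the only cycles in the network are the $n$ gadget triangles and independence follows directly. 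Once the local lemma is established, the subset-sum correspondence closes exactly as in Theorem~\ref{theo:cactusmsf}.
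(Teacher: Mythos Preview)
Your plan has a structural gap that prevents the subset-sum coupling from ever materialising. You place the $n$ gadget triangles on disjoint level-$2$ subtrees below distinct level-$1$ parents $v_i$, explicitly avoid same-level spine edges, and then note that ``the only cycles in the network are the $n$ gadget triangles and independence follows directly.'' But independence is exactly the problem: if each gadget touches the rest of the network only through the tree edge $(g,v_i)$, and the branch carrying the capacity-$w$ bottleneck to $l$ is disjoint from all of them, then every gadget optimises its own leaf consumption in isolation and will always pick the larger of $A$ and $A+x_i$. The $l$-branch saturates to $w$ regardless, and the total \msfa is the same whether or not $(M,w)$ is solvable. The sentence ``the exportable flows of the gadgets aggregate and then flow through the bottleneck into $l$'' has no realisation in your topology: whatever a gadget pulls from $g$ is consumed at its own leaves, and since $g$ is the unique generator there is no route back from a gadget to $l$. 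Without a shared node, a shared edge, or a cycle linking the gadgets to the bottleneck, nothing aggregates and the reduction collapses.

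The paper's construction achieves the coupling by a completely different mechanism and in fact \emph{relies} on a level-$2$ spine. Consecutive level-$2$ nodes $a_1,\dots,a_{n+1}$ are joined by high-capacity same-level edges, and two side branches from $g$ pin the phase angles at the ends so that in any optimal solution $\pa[a_i]=i$ is forced along the whole spine. A single level-$1$ hub $p$ carries the bottleneck edge $\{g,p\}$ of capacity $w$ and fans out to the $a_i$ with susceptances chosen so that, once the phase angles are fixed, each edge $\{p,a_i\}$ either carries its full value or must be switched off. The subset choice is thus one switching decision per fan-out edge, and the aggregation to $w$ happens at the shared node $p$ via Kirchhoff's law. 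There are no per-element triangle gadgets and no local lemma in the style of Lemma~\ref{lem:gsch}; the binary behaviour comes from the globally forced phase-angle ladder, not from local cycles. If you want to repair your approach, you will need some analogous global device---either a spine that rigidifies phase angles across gadgets, or a common hub through which all gadget outputs must pass---because isolated two-state gadgets under a single root cannot encode a shared budget.
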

\begin{proof}
    We prove Theorem~\ref{theo:tree} by reduction from a version of the \emph{subset sum} problem.
    Given an instance $(M, w)$, let $m + 1 := \sum_{x \in M} x$ and $M = \{a_2, \ldots, a_n\}$.
    We use $a_i$ to represent a value form $M$ as well as a symbol corresponding to that value.
    The network \net[][M,w] is defined by $\net[][M,w] := (\buses, \generators, \loads, \lines)$ with $\generators := \{g\}$, $\loads := \{ l_i \mid 1 \leq i \leq n+1 \}$, $\buses := \{g_1, g_{n+1}, p, a_1, a_{n+1} \} \cup M \cup \generators \cup \loads$ and 
    $\lines := 
    \bigcup_{1 \leq i \leq n} \allowbreak\{ \edge[p][a_i][\frac{a_i}{i}][a_i],\allowbreak \edge[a_i][l_i][1][a_i], \allowbreak\edge[a_{i-1}][a_i][m][m] \} 
    \cup
    \{ \edge[g][g_1][2m+2][m+1], \edge[g_1][a_1][2m+2][m+1],\allowbreak \edge[a_1][l_1][1][1], \edge[g][p][w][w], \edge[g][g_{n+1}][\frac{2}{n+1}][1],\allowbreak \edge[g_{n+1}][a_{n+1}][\frac{2}{n+1}][1], \edge[a_{n+1}][l_{n+1}][1][m + 1], \edge[a_n][a_{n+1}][m][m] \}$.
    We have $\msf[\net[][M,w]] = m + 2 + w \iff (M,w) \text{ is solvable}$.

    An example encoding for $(M,w) = (\{2,1,3\}, 5)$ can be found in Figure~\ref{pic:tree}.
\pics[Example for $(M,w) = (\{2,1,3\}, 5)$]{
	\subpic{The network \net[][M,w]}{treelevel}
	\subpic{A solution for \net[][M,w]}{treelevel_msf}
}[pic:tree]
\qed
\end{proof}

The two classes of graph we just studied, 
cacti and n-level Tree networks, 
are much more constrained than real world power networks.  
From the results of this section, 
we conclude that the reconfiguration problems 
are also hard for real world networks.

\section{Conclusion}
\label{conclusion}
This paper provides complexity results for two reconfiguration max flow problems in power systems: switching and utilizing FACTS devices.
It shows that the switching problem is hard even for simple network structures and that we cannot expect an arbitrarily close approximation in polynomial time.
It also shows that the problem of maximum flow with FACTS devices is hard for simple network structures and strongly NP-complete in general.

There remain some open questions in this setting, such as whether the
optimal solution can be efficiently approximated on cacti or real
world power networks arbitrarily closely or within a constant factor.
Additionally, to help find good solutions, exploitable properties
and/or heuristics must be investigated.

Real network operations however 
require further guarantees on the robustness of the configuration.  
A criterion widely used in the industry is the $N-1$ property
which states that a single failure on the network (i.e., removal of a line) 
will not cascade into a major blackout.  
Whether and how the $N-1$ requirement conflicts 
with the optimization objective remains unknown.  

A last issue is the transitional aspect of reconfiguration.  
We only assumed steady-state situations in this paper 
but transiting from one configuration to the next 
(for instance, switching a line) is a complex process, 
sometimes even infeasible.  
This adds a layer of complexity to the problem of reconfiguration 
that must be considered in real applications.

\bibliographystyle{splncs}
\bibliography{braess_paradox}

\begin{thebibliography}{1}

\bibitem{adibi2000power}
Adibi, M.M.:
\newblock Power system restoration: methodologies \& implementation strategies.
\newblock IEEE Press New York (2000)

\bibitem{PSCC11}
Van~Hentenryck, P., Coffrin, C., Bent, R.:
\newblock Vehicle routing for the last mile of power system restoration.
\newblock Proceedings of the 17th Power Systems Computation Conference
  (PSCC'11), Stockholm, Sweden (2011)

\bibitem{Braess_1968_UbereinParadoxon}
Braess, D.:
\newblock {\"U}ber ein {P}aradoxon aus der {V}erkehrsplanung.
\newblock Mathematical Methods of Operations Research \textbf{12}(1) (1968)
  258--268

\bibitem{Schweppe_1970_Powersystemstatic}
Schweppe, F., Rom, D.:
\newblock Power system static-state estimation, part ii: Approximate model.
\newblock Power Apparatus and Systems, {IEEE} transactions on (1) (1970)
  125--130

\bibitem{Fisher_2008_OptimalTransmissionSwitching}
Fisher, E., O'Neill, R., Ferris, M.:
\newblock Optimal transmission switching.
\newblock Power Systems, IEEE Transactions on \textbf{23} (2008)  1346--1355

\bibitem{michael1979computers}
Michael, R.G., Johnson, D.S.:
\newblock Computers and intractability: A guide to the theory of
  {NP}-completeness.
\newblock WH Freeman \& Co., San Francisco (1979)

\bibitem{garey1976planar}
Garey, M., Johnson, D., Tarjan, R.:
\newblock The planar hamiltonian circuit problem is {NP}-complete.
\newblock SIAM Journal on Computing \textbf{5}(4) (1976)  704--714

\end{thebibliography}
\newpage
\begin{appendix}
\section{Proofs}

%
%
\subsubsection{Proof of Theorem~\ref{theo:hamilton}:}
It is easy to see that the reduction is polynomial.

To achieve a \msfa of $2$, all edges of $v_0$ and $v_{n+1}$ have to be congested. To achieve the congestion of the path $\{v'_0, \ldots, v'_{n+1}\}$, we have to have that $\pa[v'_{n+1}] - \pa[v'_0] = n + 1$. The fact that the edges \edge[v_0][a] and \edge[b][v_{n+1}] are congested implies that $\pa[b] - \pa[a] = n - 1$. 

Let $a = c[1], \ldots, c[k] = b$ be a path of length $k$ in the graph $(\buses[h], \lines[h])$ from $a$ to $b$. The DC power law implies that every edge in that graph allows for a maximum phase angle difference of $1$. Therefore, the maximum phase angle difference between $a$ and $b$ with respect to this path is bounded by $k-1$, so $\pa[b]- \pa[a] \leq k-1$.

If the \msfa is $2$, then because $\pa[b] - \pa[a] = n-1 \leq k-1$ and $k \leq n$, there has to be a path of length $n$. On the other hand, if there exists a path of length $n$, then we simply switch all edges that are not part of this path. Given a phase angle difference of $n-1$ between $a$ and $b$, this path has a flow of $1$ and hence Kirchhoff's conservation law is fulfilled for $a$ and $b$. This shows that there is a solution with an \msfa of $2$.

%
%
\subsubsection{Proof of Theorem~\ref{lem:gsch}:}
    To achieve $\gen[g] = 3x$ we need that both edges \edge[g][l][1][2x] and \edge[g][v][1][x] are unswitched and congested.
    One way is to have $\flow[v][l][1][x] = x$ in which case $\load[v] = 0$.
    On the other hand we can switch \edge[v][l][1][x] which implies $\load[v] = x$.

    If $v$ acted as a generator the flow on the edge \edge[g][v][1][x] would be reduced and hence the generation on $g$.
    
%
%
\subsubsection{Proof of Lemma~\ref{lem:gfch}:}
    Note that to simplify notations we omit writing the susceptances and capacities on our edges.
    Table~\ref{tab:cn} presents two flows with a generation of $6.1x$. 
    This implies that every \optsoln has to have a generation greater or equal to $6.1x$.
    
    Let \sol be an \optsoln of the \mffa.    

    The capacity of the edge \edge[e][v] is $0.4x$.
    First we show that when the flow goes from $v$ to $e$ then we maximally have a flow of $-0.1x$.
    We have:
    \begin{align*}
    6.1x &\leq \mff[\gfch[x,v][-]] = \gen[g] + \gen[t] + \gen[e]\\
    	&= \flow[g][v] + \flow[t][c] + \flow[t][l] + \flow[e][c] + \flow[e][v]\\ 
    	&\leq \capa[g][v] + \capa[t][c] + \capa[t][l] + \capa[e][c] + \flow[e][v]\\
        &= 6.2x + \flow[e][v]
    \end{align*}
    which implies $-0.1x \leq \flow[e][v]$.

	Next, we show that 
	\begin{equation}
		\label{eq:gfch_upper}
	\flow[e][v]\frac{\sus[e][v] - 1}{\sus[e][v]} \leq 0.15x
	\end{equation}
	 and that equality is true if and only if $\flow[e][v] = 0.4x$ and $\sus[e][v] = 1.6$ or $\flow[e][v] = -0.1x$ and $\sus[e][v] = 0.4$.    

	 Let us assume that \flow[e][v] is positive.
     To obtain an upper bound for $\flow[e][v]\frac{\sus[e][v] - 1}{\sus[e][v]}$ the term $\sus[e][v] - 1$ has to be positive which implies that $1.6 \geq \sus[e][v] \geq 1$ and given that the capacity of \flow[e][v] is $0.4x$ we have $\flow[e][v]\frac{\sus[e][v] - 1}{\sus[e][v]} \leq 0.4x \frac{1.6 - 1}{1.6} = 0.15x$. 
	 
     On the other hand, if \flow[e][v] is negative then $\sus[e][v] - 1$ has to be negative which implies $0.4 \leq \sus[e][v] \leq 1$.
     Because $-0.1x \leq \flow[e][v]$, we have $\flow[e][v]\frac{\sus[e][v] - 1}{\sus[e][v]} \leq -0.1x \frac{0.4 - 1}{0.4} = 0.15x$.\\ 

	Next we show:
	\begin{equation}
		\label{eq:gfchtl}
		\flow[t][l] = 2\flow[t][c] + 2\flow[e][c] - \frac{\flow[e][v]}{\sus[e][v]}.
	\end{equation}	 
	This equation can be derived by combining the following three equations:
    \begin{subequations}
    \begin{align}
        \flow[t][l] = \flow[t][c] + \flow[c][l]\\
		\label{eq:gfchl}
		\flow[e][c] = \frac{\flow[e][v]}{\sus[e][v]} + \flow[v][c]\\
        \flow[c][l] = \flow[t][c] + \flow[v][c] + \flow[e][c]
    \end{align}
    \end{subequations}
    The first two are obtained from the LDC power law applied to the triangles formed by $\{t, l, c\}$ and $\{v,e,c\}$ where the last one comes from Kirchhoff's conservation law at node $c$.

	Finally, if we use Eq.~\ref{eq:gfchtl} to substitute \flow[t][l] and apply Eq.~\ref{eq:gfch_upper} we get:
	\begin{align*}
    6.1x &\leq \mff[\gfch[x,v][-]] = \gen[g] + \gen[t] + \gen[e]\\
	&= \flow[g][v] + \flow[t][c] + \flow[t][l] + \flow[e][c] + \flow[e][v]\\  
	&= \flow[g][v] + 3\flow[t][c] + 3\flow[e][c] + \flow[e][v](1 - \frac{1}{\sus[e][v]})\\
	&\leq \capa[g][v] + 3 \capa[t][c] + 3 \capa[e][c] + 0.15x = 6.1x
	\end{align*}
	which implies $\mff[\gfch[x,v][-]] = 6.1x$, $\flow[g][v] = x$, $\flow[e][c] = 0.65x$, $\flow[t][c] = x$ and $\flow[e][v]\frac{\sus[e][v] - 1}{\sus[e][v]} = 1.5x$ for all solutions. As argued above, there are only two possible value pairs of $(\flow[e][v], \sus[e][v])$ for the last term, namely $(0.4x,1.6)$ and $(-0.1x,0.4)$.
    For $(0.4x,1.6)$ and using Eq.~\ref{eq:gfchl} to compute \flow[v][c], Kirchhoff's conservation law at $v$ gives us 
		$$\load[v] = \flow[g][v] + \flow[e][v] - \flow[v][c] = x + 0.4x - (0.65x - 0.25x) = x.$$
	Similarly, for $(-0.1x, 0.4)$ we have 
		$$\load[v] = \flow[g][v] + \flow[e][v] - \flow[v][c] = x - 0.1x - (0.65x + 0.25x) = 0.$$
		
    We know show that for every \optsoln of \mff[\gfch[x,v][+]] we have $\gen[v] > 0 \implies \gen[g] + \gen[e] + \gen[t] < 6.1x$.
    In the \optsoln where $\load[v] = 0$, all edges from all remaining loads ($l$) are congested.
    Hence it is not possible to increase the inner generation by having $v$ act as generator.
    It is easy to see that $\gen[v] + \gen[g] = x$ has to be true and hence if $\gen[v] > 0$ then $\gen[g] + \gen[e] + \gen[t] < 6.1x$.

%
%
\subsubsection{Proof of Theorem~\ref{theo:strongly}:}
It is easy to see that this encoding is polynomial. 
Let $X \in S$. 
Lemma~\ref{lem:gfch}:1 tells us that \gfch[3,v_X] can maximaly generate $18.3$ if $v_X$ acts as load for \gfch[3,v_X]. 
On the other hand, Lemma~\ref{lem:gfch}:3 tells us that if we do the opposite and provide power to \gfch[3,v_X] via the node $v_X$ then the generation of the network is strictly less than $18.3$.
The sum of capacities of all edges connected to $g$ is $3 + |M|$.
Hence, to achieve an overall generation of $3 + 18.3|S| + |M|$ all edges from the generator have to be congested, all networks \gfch[3,v_X] have to be at their \mffa and every $v_X$ has to act as a generator for the network \net[*][M,S].
The later together with Lemma~\ref{lem:gfch}:2 implies that every $v_X$ either provides $0$ or $3$ to the network \net[*][M,S].
$v_X$ has three edges \edge[v_X][x][1][1] with $x \in X$.
Therefore, there is either no incoming flow to all nodes $x \in X$ from $v_X$ or all gain $1$. We call a network \gfch[3, v_X] that provides $1$ to its $x \in X$ \emph{active}.

For every \optsoln with phase angles \pas we assume w.l.o.g. that $\pa[g] = 0$.
Also, because the edges \edge[g][l][1][3] and \edge[g][x][1][1] have to be congested, we have $\pa[l] = 3$ and $\forall x \in M: \pa[x] = 1$. These phase angles imply that $\flow[\edge[x][l][1][1]] = 2$.
They also imply that there cannot be any flow from one node $x_1 \in M$ to any other node $x_2 \in M$ because they all have the same phase angle.
Since there is an incoming flow of $1$ from the edge \edge[g][x][1][1], to fulfil Kirchhoff's conservation law the node $x$ needs an additional incoming flow of exactly $1$.
Therefore, for all $X \in S$ with $x \in X$ there has to be exactly one active network \gfch[3,v_X].

Let $T$ be a solution of $(M,S)$. We activate all networks \gfch[3, v_X] where $X \in T$. Because $T$ is a solution, this will provide a flow of exactly $1$ to every node $x$ and hence Kirchhoff's conservation law at every node $x \in M$ is satisfied.

On the other hand, let \swsol be a solution of the \mffa with a total generation of $3 + 18.3|S| + |M|$. We define $T := \{ X \in S \mid \gfch[3, v_X] \text{ is active}\}$. The solution satisfies Kirchhoff's conservation law and therefore, using our observations above, every node is connected to exactly one active network. Therefore, $T$ is a solution of $(M,S)$.

%
%
\subsubsection{Proof of Theorem~\ref{theo:cactusmsf}:}
It is easy to see that this reduction is polynomial and because the networks \gsch[x,v] are cactuses, that the network \net[M,w] is a cactus. 

Let $x_i \in M$. 
Lemma~\ref{lem:gsch}:1 tells us that \gsch[x_i,v_i] can maximally generate $3x_i$ if $v_i$ acts as load for \gsch[x_i,v_i]. 
On the other hand, Lemma~\ref{lem:gsch}:3 tells us that if we do the opposite and provide power to \gsch[x_i,v_i] via the node $v_i$ then the generation of the network is strictly less than $3x_i$.
The sum of capacities of all edges connected to $g$ is $3 + w$.
Hence, to achieve an overall generation of $3 + w + 3\sum_{1 \leq i \leq n} x_i$ all edges from the generator have to be congested, all networks \gsch[x_i,v_i] have to be at their \msfa and every $v_i$ has to act as a generator for the network \net[*][M,S].
The later together with Lemma~\ref{lem:gsch}:2 implies that $v_i$ either provides $0$ or $x_i$ to the network \net[*][M,S].
We call a network \gsch[x_i,v_i] that provides $x_i$ \emph{active}.
The fact that all edges from the $g$ are congested implies that the edge \edge[v_1][l][1][w+1] has to be congested as well and flows from $v_1$ to $l$. 
Since $g$ provides only a flow of $1$ to $v_1$, we achieve an \msfa of $3 + w + 3m$ if and only if the networks \gsch[x_i,v_i] provide the other $w$.

Let $V$ be a solution of $(M,w)$. Then we active all networks \gsch[x_i,v_i] with $x_i \in V$. Because $\sum_{x \in V} x = w$ we know $v_1$ fulfils Kirchhoff's conservation law.

On the other hand, if $\msf[\net[M,w]] = 3 + w + 3m$, then we take any \optsoln and define $V := \{ x_i \in V \mid \gsch[x_i,v_i] \text{ is active}\}$. Since we know that the solution fulfils Kirchhoff's junction law in $v_1$ we have $\sum_{x \in V} x = w$.

%
%
\subsubsection{Proof of Theorem~\ref{theo:tree}:}
	It is easy to see that \net[M,w] is a 2-level Tree network. 

	\textbf{Case 1:}$\msf[\net[M,w]] = m + 2 + w \implies (M,w) \text{ is solvable}.$\\
	Let \dsol be an \optsoln of the \msfa. 
    W.l.o.g. let $\pa[g] = 0$. 
    Since the max flow is $m + 2 + w$ we know that the edges \edge[g][g_1], \edge[g_1][a_1], \edge[g][g_{n+1}], \edge[g_{n+1}][a_{n+1}] and \edge[g][p] are congested and therefore $\pa[a_1] = \pa[p] = 1$ and $\pa[a_{n+1}] = 2 \frac{n+1}{2} = n+1$. 
    This implies that we have at least $m + 2$ incoming power at the node $a_1$. 
    Since the other two edges have in sum a capacity of $m + 2$ we know that they are congested. 
    Therefore, we obtain $\pa[a_2] = 2$. 
    For the node $a_n$ we know that the phase angle can not be bigger then $n$ because that would overload the edge \edge[p][a_i][a_i][\frac{a_i}{i}]. 
    However, if the phase angle is smaller then $n$, then the edge \edge[a_n][a_{n+1}][m][m] is overloaded. 
    Therefore, $\pa[a_n] = n$. 
    We can apply similar arguments to $a_{n-1}$ to have a phase angle of $n-1$. 
    Overall, we derive that $\forall 1 \leq i \leq n: \pa[a_i] = i$. 
    Hence, the edges \edge[a_{i-1}][a_i][m][m] for $1 \leq i \leq n+1$ must be congested. 
    They also cannot be in \linesw since their flow $m$ is greater then the sum of elements of $M$ an therefore the sum of power we can send to the loads $l_i$ and can get from the nodes $a_i$  with $1 \leq i \leq n$.
		
	We define $V := \{ a_i \in M \mid \edge[p][a_i] \notin \linesw \}.$ We know that the incoming power at node $p$ is $w$, $p$ respects Kirchhoff's conservation law and that all edges $\edge[p][a_i] \notin \linesw$ are congested. Therefore $\sum_{x \in M} x = w$.\\
	\newline
	\textbf{Case 2:}$(M,w) \text{ is solvable} \implies \msf[\net[M,w]] = m + 2 + w.$\\	
	Let $V$ be a solution of $(M,w)$. We define phase angle \pas with $\pa[g] := 0, \pa[g_1] := \frac{1}{2}, \pa[g_{n+1}] := \frac{n+1}{2}, \pa[a_i] := i, \pa[l_1] := 2, \pa[l_{n+1}] ;= n+2+m$ and 	$\forall 1 \leq i \leq n:$
	$$\pa[l_i] := 
		\begin{cases}
			i + a_i		&\text{if } a_i \in V\\
			i			&\text{otherwise}.
		\end{cases}
	$$
	We also define $\linesw := \{ \edge[p][a_i] \in \lines \mid a_i \notin V \}$. Since the sum of all elements of $V$ is $w$, we know that with this definition $p$ respects Kirchhoff's conservation law and it is easy to see that all other nodes do the same. This definition also implies a flow of $m + 2 + w$ which is the \msfa because all edges of the generator are congested.
\end{appendix}

\end{document}